\newtheorem{theorem}{Theorem}[section]
\newtheorem{lemma}[theorem]{Lemma}
\newtheorem{question}[theorem]{Question}
\theoremstyle{definition}
\newtheorem{definition}[theorem]{Definition}
\author{Mats Bierwirth \\
        Dept. of Computer Science, ETH Z\"{u}rich, Switzerland \\ {\tt mbierwith@student.ethz.ch}
        \and
        Julia H\"{u}tte \\
        Dept. of Computer Science, ETH Z\"{u}rich, Switzerland \\ {\tt jhuette@student.ethz.ch}
        \and
        Patrick Schnider \\
        Dept. of Mathematics and Computer Science,
        University of Basel\\
        Dept. of Computer Science, ETH Z\"{u}rich, Switzerland \\ {\tt patrick.schnider@inf.ethz.ch}
        \and
        Bettina Speckmann \\
        Dept. of Mathematics and Computer Science, TU Eindhoven, The Netherlands \\ {\tt b.speckmann@tue.nl}
}
\title{Bounds for $k$-centers of point sets\\ under $L_{\infty}$-bottleneck distance}
\date{}
\begin{document}

\maketitle

\begin{abstract}
We consider the $k$-center problem on the space of fixed-size point sets in the plane under the $L_{\infty}$-bottleneck distance. While this problem is motivated by persistence diagrams in topological data analysis, we illustrate it as a \emph{Restaurant Supply Problem}: given $n$ restaurant chains of $m$ stores each, we want to place supermarket chains, also of $m$ stores each, such that each restaurant chain can select one supermarket chain to supply all its stores, ensuring that each store is matched to a nearby supermarket.
How many supermarket chains are required to supply all restaurants?
We address this questions under the constraint that any two restaurant chains are close enough under the $L_{\infty}$-distance to be satisfied by a single supermarket chain. We provide both upper and lower bounds for this problem and investigate its computational complexity.
\end{abstract}

\section{Introduction}
The \emph{$k$-center problem} is a classical problem in computational geometry \cite{10.1145/509907.509947, 10.1145/1376916.1376944, 10.5555/3001460.3001507, Handler1979LocationON, doi:10.1137/0213014}. Given a set $P$ of points in some metric space $X$, the goal is to partition them into $k$ parts $P_1,\ldots,P_k$ such that for each part $P_i$ there is some other point $x_i\in X$ that is close to each point $p\in P_i$. While this problem has mainly been studied for points in Euclidean space, motivated by \emph{persistence diagrams} in topological data analysis we consider $X$ as the space of $m$ unordered points in $\mathbb{R}^2$ under the $L_{\infty}$-bottleneck distance. Intuitively, the bottleneck distance is defined as the length (in $L_{\infty}$-distance) of the longest edge in a perfect matching minimizing said length. Computing such a bottleneck distance is another classical problem in computational geometry \cite{Burkard1980, efrat1996improvements, KARIMABUAFFASH2014447, KATZ2023101986}.
The main difference of our setting to persistence diagrams is that the latter contain infinitely many points on the so-called diagonal. Nevertheless, we hope that some of our ideas might be used to construct $k$-centers for persistence diagrams, at least under some additional assumptions. Such $k$-centers could be interesting for example for clustering persistence diagrams, a topic that has recently gained attention \cite{cao2024k, cohen2005stability, kerber2017geometry, lacombe2018large, marchese2017k, 8794517}.

In order to illustrate the problem, we give another interpretation in terms of supplying restaurant chains with supermarket chains, which we call the \emph{Restaurant Supply Problem}:
In a city, for example Manhattan, there are $n$ different restaurant chains that have $m$ stores each. Their supply is secured by $k$ supermarket chains that also have $m$ stores each. All restaurants that belong to the same chain get their supply from the same supermarket chain. However, each restaurant within one chain gets it from a different store. This means that as soon as a supermarket chain gets chosen by a restaurant chain, each store gets matched to one specific restaurant that it supplies. We are interested in the number of supermarket chains needed to satisfy all restaurant chains. Formally, we define the following:

\begin{definition}
Let $R_1=(r_1^1, r_1^2,\ldots, r_1^m),\ldots,R_n=(r_n^1,\ldots r_n^m)$ be $n$ restaurant chains with $m$ stores each in some metric space $X$. We say that a supermarket chain consisting of $m$ supermarkets $s_1,\ldots,s_m$ \emph{$\delta$-satisfies} (or equivalently just \emph{satisfies}) a set of restaurant chains $R_a,\ldots, R_b$ if there exists a relabeling of the stores in each $R_i$ such that the distance between $s_j$ and $r_i^j$ is at most $\delta$ for each $i$ and $j$. More generally, we say that $k$ supermarket chains $\delta$-satisfy the $n$ restaurant chains $R_1\ldots,R_n$ if $R_1,\ldots, R_n$ can be partitioned into $k$ subfamilies, each of which can be $\delta$-satisfied by a single supermarket chain.
\end{definition}

Of course, if the restaurant chains are operating sufficiently far from each other, then any supermarket chain can only satisfy one restaurant chain, so in general there are instances where we need $n$ supermarket chains. For this reason, we focus on the case where the restaurant chains are in actual competition and operate close to each other. More specifically, we will assume that any $h$ restaurant chains can be satisfied by a single supermarket chain. This assumption can make a big difference: consider the case where each restaurant chain only has a single store, and assume that distances are measured in Euclidean distance. Then a single supermarket satisfies some $h$ restaurants whenever it lies in the intersection of the disks with radius $\delta$ centered at the locations of the restaurants. As these disks are convex, it follows from Helly's theorem that if any 3 of them have a common intersection, then all of them do. In other words, if any 3 restaurants can be satisfied by a single supermarket, then all restaurants can be satisfied by a single supermarket.

The Restaurant Supply Problem is now the optimization problem of minimizing the number of supermarket chains under this assumption.
In the rest of this manuscript, we will restrict our attention to cities that, like Manhattan or large parts of Barcelona, are laid out on a square grid and thus distances are measured differently\footnote{Formally speaking, in such cities distances are measured in the $L_1$-metric. However, as we only care about metric balls in our arguments, and as metric balls are squares in the plane under both metrics, in order to be consistent with our motivation from persistence diagrams, we choose to work with the $L_{\infty}$-metric.}. More formally, we define the distance between two points in the plane as the $L_{\infty}$-distance. Then the disks of radius $\delta$ are axis-aligned squares and from Helly's theorem for boxes we get an even stronger statement for chains with only one store: if any two restaurants can be satisfied by a single supermarket, then all of them can. For this reason, we set $h=2$ for the rest of this manuscript, that is, we assume that any two restaurant chains can be satisfied by a single supermarket chain.
%All omitted proofs can be found in Appendix \ref{app:omitted}.

\section{An upper bound}

%We start by giving an upper bound on the number of supermarkets that are sufficient to satisfy all restaurant chains.

\begin{theorem}\label{thm:upper_bound}
    Let $\{r^1_1,\ldots,r^m_1\},\ldots,\{r^1_n,\ldots,r^m_n\}$ be $n$ restaurant chains with $m$ stores each, such that any two of them can be satisfied by a single supermarket chain. Then all of them can be satisfied by $k=m!$ supermarket chains.
\end{theorem}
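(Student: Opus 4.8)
The plan is to single out one chain as a reference and to classify the other chains by how they match to it. Fix $R_1$ together with a fixed labeling $r_1^1,\dots,r_1^m$ of its stores. For every $i\ge 2$, the hypothesis provides a supermarket chain that satisfies both $R_1$ and $R_i$; reading off, through that supermarket, which store of $R_i$ is matched to the store $r_1^j$ of $R_1$ defines a permutation $\pi_i\in S_m$. Since $|S_m|=m!$, the chains $R_2,\dots,R_n$ fall into at most $m!$ classes according to the value of $\pi_i$, and $R_1$ itself can be placed in the class of the identity permutation. It then suffices to prove that the chains in one such class can be simultaneously satisfied by a single supermarket chain, which yields the bound $k=m!$.

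Fix such a class $G$ and relabel the stores of every $R_i\in G$ by $\pi_i$, so that all chains of $G$ are aligned with $R_1$ slot by slot; write $\tilde r_i^{\,j}$ for the relabeled $j$-th store of $R_i$. For a fixed slot $j$, I want one point $s_j$ within $L_\infty$-distance $\delta$ of every $\tilde r_i^{\,j}$ with $R_i\in G$. The set of admissible $s_j$ for a single chain is the square of side $2\delta$ centered at $\tilde r_i^{\,j}$; by Helly's theorem for axis-parallel squares (Helly number $2$, as recalled in the introduction), a common $s_j$ exists as soon as these squares pairwise intersect, i.e.\ as soon as the points $\{\tilde r_i^{\,j}:R_i\in G\}$ are pairwise within $2\delta$. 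Carrying this out for each $j$ assembles a supermarket chain $(s_1,\dots,s_m)$ that satisfies all of $G$, completing the argument.

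The crux is thus to guarantee that, within a class, the aligned slot-$j$ stores of any two chains are within $2\delta$. Passing only through the supermarket of the pair $\{R_1,R_i\}$ gives $\|\tilde r_i^{\,j}-r_1^j\|_\infty\le 2\delta$ and hence the weaker bound $\|\tilde r_i^{\,j}-\tilde r_{i'}^{\,j}\|_\infty\le 4\delta$, which is a factor $2$ too large for box-Helly to deliver radius $\delta$. To close this gap one has to exploit that $R_i$ and $R_{i'}$ are themselves jointly satisfiable and to choose the witnessing permutations $\pi_i$ \emph{coherently}, so that the matching they induce between any two chains of a common class is exactly the aligned (identity) matching; establishing the existence of such a coherent choice — a consistency statement about the $\pi_i$, again powered by Helly's theorem for boxes — is the step I expect to be the main obstacle. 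A useful partial handle is the $x$-coordinate: because one-dimensional bottleneck matchings are realized by the sorted order, sorting each chain's stores by $x$ already forces the $x$-coordinates of aligned stores to agree up to $2\delta$ for \emph{every} pair, so only the $y$-coordinates and the residual freedom of permuting stores with clustered $x$-coordinates remain to be controlled — and it is in resolving that residual freedom that I expect the factor $m!$ to arise.
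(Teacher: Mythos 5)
Your overall architecture matches the paper's: partition the chains into at most $m!$ classes, show that within a class the stores occupying the same slot are pairwise within $2\delta$, and finish with Helly's theorem for boxes. But the one step you defer --- the pairwise $2\delta$ bound within a class --- is the entire content of the theorem, and the classification you propose does not deliver it. Classifying $R_i$ by the permutation induced by a witnessing matching against a reference chain $R_1$ is not intrinsic: it depends on the arbitrary initial labeling of the stores of $R_i$ and on the (possibly non-unique) choice of witness, and it can place mutually incompatible chains into the same class. Concretely, take the ``city'' from the paper's lower bound: six points $p_1,\dots,p_6$ on a circle whose $\delta$-squares intersect exactly for cyclically adjacent indices, with $A=\{p_1,p_4\}$, $B=\{p_2,p_5\}$, $C=\{p_3,p_6\}$. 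Every pair of chains is satisfiable, each via a unique matching, and with the labelings $B=(p_2,p_5)$ and $C=(p_6,p_3)$ both induced permutations are the identity, so $A,B,C$ all land in one class; yet the slot-one stores $p_2$ and $p_6$ are more than $2\delta$ apart, and indeed the three chains cannot be satisfied by one supermarket chain since the restaurant graph is a triangle-free $6$-cycle. So the coherence you hope for genuinely fails for this classification; as you yourself observe, the triangle inequality through $R_1$ only yields $4\delta$, and box-Helly cannot absorb that factor.

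The paper closes the gap by making the class label intrinsic to each chain, with no reference chain and no choice of witness: label each chain's stores $1,\dots,m$ by increasing $x$-coordinate and take as the class label the permutation in which these labels appear when the stores are read off by increasing $y$-coordinate (your closing remark about sorting by $x$ is exactly half of this idea). The key lemma is then proved by a counting argument applied directly to the two chains at hand: if two same-index stores $r_1\in R_1$ and $r_2\in R_2$ from the same class had disjoint $\delta$-squares, there would be an axis-parallel separating line $\ell$, say horizontal with $r_2$ above; since each store splits its own chain into four quadrants with the same counts $a,b,c,d$, the $a+b+1$ stores of $R_2$ at or above $r_2$ would all have to be matched, in the witnessing matching for the pair $(R_1,R_2)$ guaranteed by the hypothesis, to stores of $R_1$ strictly above $\ell$, of which there are only $a+b$ --- a contradiction. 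Replacing your reference-chain classification by this order-type classification and supplying this counting lemma is what your argument is missing.
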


    For each restaurant chain, we consider its $m$ stores and annotate them with the numbers $1,\ldots,m$ from left to right. Now consider the permutation $\pi$ obtained when enumerating the points of that restaurant chain from bottom to top. If any two points have the same $x$- or $y$-coordinate, we break ties lexicographically.
    We claim that all restaurant chains from the same permutation can be satisfied by just one supermarket chain. 
    \begin{lemma} \label{lem:UpperBound}
        For two restaurants $r_1, r_2$ annotated with the same index from chains $R_1, R_2$ in the same permutation, their distance is at most $2\delta$.
    \end{lemma}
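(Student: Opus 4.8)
The plan is to first translate the hypothesis into a purely combinatorial matching statement and then reduce the two-dimensional claim to two one-dimensional sorting facts. Since $R_1$ and $R_2$ can be satisfied by a common supermarket chain $s_1,\dots,s_m$, there is a relabeling of the stores of each chain with $d(s_j,r_1^j)\le\delta$ and $d(s_j,r_2^j)\le\delta$ for all $j$; the triangle inequality then produces a perfect matching $\sigma$ between the stores of $R_1$ and those of $R_2$ with $L_\infty$-distance at most $2\delta$ on every matched pair. The goal is to upgrade this to the specific matching that pairs stores carrying the same index (the left-to-right rank). Since for points $u,v$ in the plane we have $d(u,v)\le 2\delta$ if and only if both $|u_x-v_x|\le 2\delta$ and $|u_y-v_y|\le 2\delta$, it suffices to show separately that same-index pairs have $x$-coordinate difference at most $2\delta$ and $y$-coordinate difference at most $2\delta$.

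The tool for both halves is the classical one-dimensional exchange fact: if $a_1\le\dots\le a_m$ and $b_1\le\dots\le b_m$ are sorted reals, then $\max_i|a_i-b_i|\le\max_i|a_i-b_{\tau(i)}|$ for every permutation $\tau$, i.e. the identity matching minimizes the bottleneck. I would prove this by a bubble-sort argument based on the two-element inequality $\max(|a-b|,|a'-b'|)\le\max(|a-b'|,|a'-b|)$ whenever $a\le a'$ and $b\le b'$, which itself reduces to a short case check; each transposition removing an inversion does not increase the bottleneck, so one reaches the identity without ever increasing it.

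Applying this to the $x$-coordinates is immediate: when the stores of $R_1$ and of $R_2$ are listed in left-to-right order they are $x$-sorted, so the same-index matching has maximum $x$-difference at most that of $\sigma$, hence at most $2\delta$. For the $y$-coordinates I list the stores of each chain from bottom to top; the crucial observation is that, because $R_1$ and $R_2$ realize the \emph{same} permutation $\pi$, the store of $x$-rank $\ell$ has $y$-rank $\pi^{-1}(\ell)$ in both chains, so the identity matching in bottom-to-top order is exactly the same pairing of stores as the same-index matching. The one-dimensional fact applied to the ($y$-sorted) sequences of $y$-coordinates then gives $y$-difference at most $2\delta$ on same-index pairs. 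Combining the two coordinate bounds yields $d(r_1,r_2)\le 2\delta$, proving the lemma; feeding this into Helly's theorem for boxes, together with the fact that there are at most $m!$ permutations, then gives \autoref{thm:upper_bound}.

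The step I expect to be the main obstacle is the $y$-coordinate half: making precise that the identity matching in $y$-sorted order denotes the same pairing of stores as the identity matching in $x$-sorted order. This is exactly where the assumption that $R_1$ and $R_2$ induce the same permutation is essential, and it also forces me to be careful that the left-to-right and bottom-to-top orders — and hence the permutation — are unambiguously defined, which is the role of the lexicographic tie-breaking, so that the one-dimensional exchange argument still goes through with the non-strict inequalities that coordinate ties may introduce.
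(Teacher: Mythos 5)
Your proof is correct, but it takes a genuinely different route from the paper's. The paper argues by contradiction: if the $\delta$-balls of $r_1$ and $r_2$ were disjoint, an axis-parallel line $\ell$ would separate them, and a quadrant count --- using that the two stores carry the same index and the chains realize the same permutation, so the four quadrants at $r_1$ and at $r_2$ contain equally many stores of their respective chains --- shows that the $a+b+1$ stores of $R_2$ at or above $r_2$ would all have to be matched, within the bottleneck-$2\delta$ matching guaranteed by the common supermarket chain, to the at most $a+b$ stores of $R_1$ strictly above $\ell$, a pigeonhole contradiction. You instead split the $L_\infty$ bound coordinatewise and invoke, twice, the one-dimensional fact that the monotone (identity) matching between two sorted sequences minimizes the bottleneck over all permutations; the same-permutation hypothesis enters exactly where you say it does, namely to make the $x$-sorted identity matching and the $y$-sorted identity matching coincide as pairings of stores, so that a single pairing is simultaneously optimal in both coordinates. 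Both arguments are sound, and both handle coordinate ties via the lexicographic order, which keeps the coordinate sequences non-decreasing --- all your exchange lemma needs. Your version is more modular and makes transparent why ``same permutation'' is the right hypothesis, at the cost of proving an extra (if classical) 1D lemma; the paper's quadrant count is shorter and stays entirely geometric.
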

    \begin{proof}
    The restaurants $r_1, r_2$ split the plane into four quadrants each, see Figure \ref{fig:upper_bound}. Since both stores are annotated with the same index and belong to chains from the same permutation, their quadrants contain the same number of points. Let $a$ be the number of points in the top-left, $b$ in the top-right, $c$ in the bottom-left and $d$ in the bottom-right quadrant. Furthermore, let $s_1$ and $s_2$ be the $\delta$-balls (which are squares in $L_{\infty}$) around $r_1$ and $r_2$. For sake of contradiction, assume that they do not intersect. By the separation theorem, there exists either a horizontal or vertical separation line between them. Without loss of generality consider the case of a horizontal separation line $\ell$ and let $r_2$ be above $r_1$.
    
    \begin{figure}[t]
        \centering
        \includegraphics[]{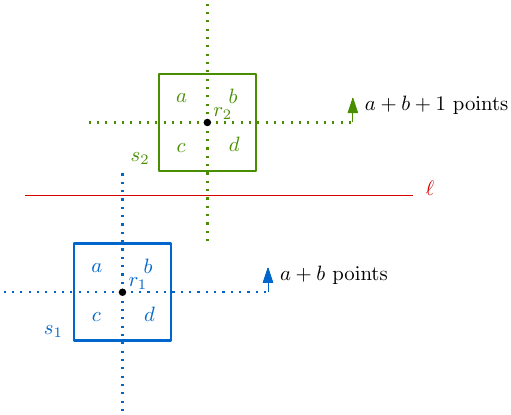}
        \caption{Two restaurants annotated with the same index from chains in the same permutation split the plane into quadrants with the same number of restaurants.}
        \label{fig:upper_bound}
    \end{figure}
    
    \noindent By the Helly-like assumption of our theorem, $R_1, R_2$ can be satisfied using only one supermarket chain. Thus, their stores can be matched in a way such that the distance of the longest matching edge is at most $2\delta$.
    There are at least $a+b+1$ restaurants in $R_2$ (the upper two quadrants plus $r_2$ itself) that can in such a matching only be matched to restaurants strictly above $\ell$. However, $R_1$ can contain at most $a+b$ such restaurants. By contradiction, $s_1$ and $s_2$ intersect.
    \end{proof}
    
\begin{proof}[Proof of Theorem \ref{thm:upper_bound}]
    By Lemma \ref{lem:UpperBound} any two stores  annotated with the same index from chains in the same permutation are at most distance $2\delta$ apart. In other words, their $\delta$-balls intersect, and thus, the statement follows from Helly's theorem for boxes.
\end{proof}

\section{A lower bound}

Unfortunately, the bound proven in the previous section has a superexponential dependence on the number of stores per chain $m$. As it turns out, no subexponential bound exists.

\begin{restatable}{theorem}{lowerbound}\label{thm:lower_bound}
    There exists a placement of $n$ restaurant chains $\{r^1_{1}, \dots, r^m_{1}\},\ldots,\{r^1_{n}, \dots, r^m_{n}\}$, with $m$ stores each, such that any two can be satisfied by a single supermarket chain, but satisfying all $n$ chains requires  $k \geq \min(n,\lfloor e^{m/(27+\varepsilon)}\rfloor) /2$ supermarket chains, for any constant $\varepsilon>0$.
\end{restatable}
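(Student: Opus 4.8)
The plan is to produce $N:=\min\bigl(n,\lfloor e^{m/(27+\varepsilon)}\rfloor\bigr)$ restaurant chains that are pairwise satisfiable but such that no supermarket chain satisfies three of them at once. Then every satisfiable subfamily has size at most two, so these chains need at least $\lceil N/2\rceil\ge N/2$ subfamilies, i.e.\ $k\ge N/2$; when $n>N$ one builds only $N$ such ``conflicting'' chains and pads with $n-N$ further chains taken from the same family of configurations so that all $\binom n2$ pairs stay satisfiable (extra chains only increase $k$). The reformulation I will use throughout is that a single supermarket chain satisfies a family $\mathcal F$ of restaurant chains iff there is an $m$-point multiset at $L_\infty$-bottleneck distance at most $\delta$ from every member of $\mathcal F$; calling the least such radius the bottleneck circumradius of $\mathcal F$, for two chains it equals half their bottleneck distance (midpoints exist in $(\mathbb R^2,L_\infty)$), whereas for three chains it can exceed half the diameter — and creating that gap is exactly what the construction must do.

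The construction is a product over a constant-size gadget. Fix a constant $c$ and write $m=cg$. A \emph{gadget} is a configuration of $c$ points in a small disk, together with $w\ge 3$ pairwise distinct \emph{states} $S_1,\dots,S_w$ (each a $c$-point multiset) such that (a) any two states are at bottleneck distance at most $2\delta$, and (b) any three distinct states have bottleneck circumradius strictly more than $\delta$. A chain is the disjoint union of $g$ gadgets placed in $g$ regions so far apart that, at distances up to $2\delta$, no bottleneck matching between two chains can link points lying in different regions. Any $m$-point multiset involved in such a matching therefore splits, region by region, into $c$-point multisets, which reduces everything to the gadget level: two chains are satisfiable iff they are matchable gadget by gadget, which always holds by (a); and three chains are satisfiable iff in every gadget the three states they use are \emph{not} all distinct (two equal states are fine by (a), three distinct ones are excluded by (b)). Encoding each chain by its word in $[w]^g$, a triple of chains is thus unsatisfiable exactly when their three words have some coordinate on which they take three distinct values.

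Two things remain: the words and the gadget. For the words I would take a code $\mathcal C\subseteq[w]^g$ in which every three codewords are ``separated'' on some coordinate (a perfect-hash-family-type condition). An alteration argument does it: for uniformly random words a fixed triple is unseparated on a given coordinate with probability $1-\tfrac{(w-1)(w-2)}{w^2}$, hence on all $g$ coordinates with probability $\bigl(1-\tfrac{(w-1)(w-2)}{w^2}\bigr)^{g}$, so after sampling suitably many words and deleting one per bad triple one is left with a code of size exponential in $g$; optimizing the alphabet size $w$ against the gadget size $c=c(w)$ turns the exponent into $m/(27+\varepsilon)$, the $\varepsilon$ swallowing the floor, the deletion step, and the slack of the random code. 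The gadget is the crux, and the only genuinely two-dimensional ingredient. On a line it cannot exist: if $c$-point multisets on $\mathbb R$ are pairwise within bottleneck distance $2\delta$, then — the bottleneck matching being the sorted matching — each coordinate of their sorted forms spans at most $2\delta$, so the coordinatewise midpoint is a monotone $c$-point multiset within $\delta$ of all of them, contradicting (b). In $(\mathbb R^2,L_\infty)$ the bottleneck metric on $c$-point multisets is no longer hyperconvex once $c\ge 2$ (unlike the case $c=1$, which is just $(\mathbb R^2,L_\infty)$ itself), so one can place configurations pairwise at bottleneck distance $2\delta$ with every three of them of circumradius $>\delta$. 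Note that (b) also forces $S_1\cup\dots\cup S_w$ to have $L_\infty$-diameter more than $2\delta$ — otherwise a supermarket with all points at the common center would satisfy every state — while (a) keeps the states pairwise $2\delta$-close; balancing these opposing demands is what costs the points $c$, and that cost together with the code rate is what produces the constant $27$. Exhibiting the gadget explicitly, with the right $c$ and $w$, and verifying (a)--(b) by hand, is the main obstacle; the rest is bookkeeping.
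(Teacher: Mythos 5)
Your overall architecture is exactly the paper's: a product of $g$ far-apart constant-size gadgets, each admitting $w\ge 3$ pairwise-compatible but triplewise-incompatible ``states'', combined with a probabilistic code argument guaranteeing that among $\lfloor e^{m/(27+\varepsilon)}\rfloor$ random words in $[w]^g$ every triple is separated on some coordinate (the paper uses a direct union bound showing no bad triple survives with high probability, rather than your alteration step, but that is immaterial). However, you explicitly leave the gadget --- which you correctly identify as ``the crux'' and ``the only genuinely two-dimensional ingredient'' --- unconstructed, offering only a heuristic (failure of hyperconvexity of the bottleneck metric for $c\ge 2$) for why it ought to exist. That is a genuine gap: without the gadget there is no construction, and the entire quantitative claim (the constant $27$) depends on its parameters $c$ and $w$.

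The gadget does exist, with the minimal parameters $c=2$ and $w=3$, and it is the paper's ``city'': six points placed on a circle so that the $\delta$-squares centered at them have the $6$-cycle as their intersection graph, with the three states being the three antipodal pairs. Your condition (a) holds because two distinct antipodal pairs admit a perfect matching along edges of the $6$-cycle, and (b) holds because a colorful triple (one point from each of three distinct antipodal pairs) would have to be a triangle in the $6$-cycle, which is triangle-free --- so by the restaurant-graph/Helly reformulation no single supermarket chain serves three distinct states. With $c=2$ and $w=3$ one gets $g=m/2$ coordinates, a per-coordinate separation probability of $(w-1)(w-2)/w^2=2/9$, hence failure probability $(7/9)^{m/2}\le e^{-m/9}$ per triple, and the union bound over $O(N^3)$ triples yields $N=\lfloor e^{m/(27+\varepsilon)}\rfloor$. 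So your ``bookkeeping'' is right once the gadget is supplied, but as written the proof is incomplete at precisely the step that carries all the geometric content.
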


The proof is based on the following construction, an example where a single supermarket chain does not suffice. Consider six points arranged on a circle such that any two squares of radius $\delta$ centered at the points intersect if and only if the points are neighbors on the circle. We refer to this arrangement as a city; an example is illustrated in Figure \ref{fig:city}. %At this point we will only provide a sketch, please refer to Appendix \ref{app:omitted} for a full proof.

\begin{figure}[t]
    \centering
    \includegraphics[]{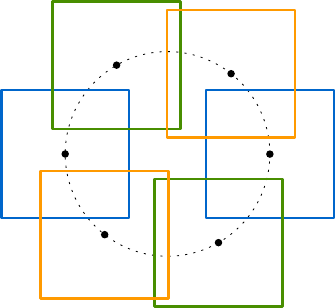}
    \caption{A city with 3 restaurant chains, each consisting of 2 stores placed on antipodal points.}
    \label{fig:city}
\end{figure}

\begin{proof} Since any two restaurant chains can be satisfied by a single supermarket chain, the minimum number of supermarket chains required to satisfy $n$ restaurant chains is at most $n/2$. Thus, assume $n \geq e^{m/(27+\varepsilon)}$. Let $n_{\max} = \lfloor e^{m/(27+\varepsilon)}\rfloor$. We construct a placement for the first $n_{\max}$ restaurant chains that requires at least $n_{\max} /2$ supermarket chains to satisfy. This then entails the theorem.

Consider $m/2$ cities, sufficiently far away from each other to not interfere with one another. In each city, each restaurant chain constructs independently and uniformly at random two stores on opposite points. We show that for any three restaurant chains, with high probability there exists a city where these three restaurant chains can not be satisfied by a single supermarket chain. This then entails, that $n_{\max}/2$ supermarket chains are needed to satisfy all restaurant chains.

For a given city and three fixed restaurant chains there are twenty-seven possible store placements of these three chains within this city. For six of those, one supermarket chain is not sufficient to satisfy the three restaurant chains. Since each placement is equally likely, the probability of this happening is $2/9$. Using independence of the cities and the inequality $1-x \leq e^{-x}$, the probability that all three chains can be satisfied in all cities is bounded by
\[
    \mathbb{P}[\text{Three restaurant chains satisfied}] = (1-2/9)^{m/2} \leq e^{-m/9}
    \leq \lfloor e^{m/9} \rfloor ^{-1} = n_{\max}^{-3-\varepsilon'}.
\]
for some constant $\varepsilon'>0$ dependent on $\varepsilon$. Applying a union bound over all $O(n_{\max}^3)$ restaurant chain triples, the probability that any triple can be satisfied by a single supermarket chain is $n_{\max}^{-\varepsilon'}$ which is infinitesimal in $n_{\max}$. Thus, with high probability, this procedure produces a placement where no triple can be satisfied by a single supermarket chain.
\end{proof}

\section{Computational complexity}

In this section we analyze the computational complexity of deciding whether all restaurant chains can be satisfied by a single supermarket chain. %We show that if the number of stores per chain is 2, then we can solve this problem in polynomial time. On the other hand, if we do not bound the number of stores per chain, then the problem is NP-complete even if there are only 3 restaurant chains. 
In both our results it will be helpful to take a graph theoretic viewpoint on the problem: we define the \emph{restaurant graph} $G$ whose vertex set is the set of all restaurant stores, and where two stores $r^b_a$ and $r^y_x$ of different chains are connected whenever their $\delta$-balls intersect. The vertex set of this graph partitions naturally into $m$-sets of vertices that correspond to the $m$ stores of a restaurant chain. We interpret the different chains as colors and say that a clique $C$ in $G$ is a \emph{colorful clique} if it contains exactly one vertex of each $m$-set associated to a restaurant chain. %The following lemma describes the solutions to our problem in the restaurant graph.

\begin{lemma}\label{lem:restaurant_graph}
    Let $\{r^1_1,\ldots,r^m_1\},\ldots,\{r^1_n,\ldots,r^m_n\}$ be $n$ restaurant chains with $m$ stores each. Then all of them can be satisfied by a single supermarket chain if and only if the vertex set of the restaurant graph $G$ can be partitioned into $m$ colorful cliques.
\end{lemma}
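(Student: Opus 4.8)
The plan is to prove the biconditional by relating the two combinatorial objects directly. For the forward direction, suppose a single supermarket chain $\{s_1,\ldots,s_m\}$ $\delta$-satisfies all $n$ restaurant chains. By definition, for each chain $R_i$ there is a relabeling of its stores so that $\mathrm{dist}(s_j, r_i^{\sigma_i(j)}) \le \delta$ for all $j$. I would fix, for each index $j \in \{1,\ldots,m\}$, the set $C_j$ consisting of the one store from each chain that is matched to supermarket $s_j$, i.e.\ $C_j = \{\, r_i^{\sigma_i(j)} : i = 1,\ldots,n \,\}$. Since every store in $C_j$ lies within $L_\infty$-distance $\delta$ of the common point $s_j$, their $\delta$-balls all contain $s_j$, hence pairwise intersect, so any two stores of \emph{different} chains in $C_j$ are adjacent in $G$; as $C_j$ contains exactly one store per chain, it is a colorful clique. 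The sets $C_1,\ldots,C_m$ are disjoint (within a fixed chain the relabeling is a bijection, so the $m$ stores of $R_i$ go to $m$ distinct indices) and their union is all of $V(G)$, giving the desired partition into $m$ colorful cliques.

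For the reverse direction, suppose $V(G)$ is partitioned into $m$ colorful cliques $C_1,\ldots,C_m$. Each $C_j$ is a set of $n$ pairwise-adjacent stores, one per chain, so the $n$ axis-aligned $\delta$-squares centered at these stores pairwise intersect. By Helly's theorem for boxes (as invoked already in \Cref{thm:upper_bound}), these $n$ squares have a common point; pick one and call it $s_j$. Then $\mathrm{dist}(s_j, r) \le \delta$ for every store $r \in C_j$. I claim $\{s_1,\ldots,s_m\}$ is a supermarket chain that $\delta$-satisfies all restaurants: for a fixed chain $R_i$, the colorfulness of the $C_j$'s means each of the $m$ stores of $R_i$ lies in exactly one $C_j$, which defines a bijection (relabeling) between the stores of $R_i$ and the indices $1,\ldots,m$ matching each store to a supermarket within distance $\delta$. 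This is precisely the condition for $\{s_1,\ldots,s_m\}$ to satisfy $R_i$, and since $i$ was arbitrary, it satisfies all $n$ chains.

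The main subtlety — rather than a true obstacle — is bookkeeping the adjacency convention: the restaurant graph only places edges between stores of \emph{different} chains, so within a single $C_j$ there are no intra-chain edges to worry about, and the colorful-clique condition (one vertex per color, all cross-color pairs adjacent) lines up exactly with "one store per chain, all pairwise within $2\delta$, hence $\delta$-balls pairwise intersecting." I would also make explicit why a relabeling in the sense of the \textbf{Definition} is the same data as choosing, for each $j$, one store of $R_i$ — both are bijections between $\{1,\ldots,m\}$ and the store-set of $R_i$ — so that the partition structure and the per-chain relabelings are interchangeable descriptions. Everything else reduces to Helly for boxes in one direction and the "common point certifies pairwise intersection" triviality in the other; no estimates are needed.
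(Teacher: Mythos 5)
Your proof is correct and follows essentially the same route as the paper's: the forward direction observes that stores matched to a common supermarket $s_j$ have $\delta$-balls all containing $s_j$ and hence form a colorful clique, and the reverse direction applies Helly's theorem for boxes to each colorful clique to extract a common point serving as the supermarket location. The extra bookkeeping you supply about relabelings being bijections is a harmless elaboration of what the paper leaves implicit.
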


\begin{proof}
    Assume there is a placement of the $m$ supermarket stores $s_1,\ldots,s_m$ such that each restaurant $r^i_j$ is satisfied by the store $s_i$. In particular, any two vertices $r^i_a$ and $r^i_b$ are connected and we can thus partition the vertex set of $G$ into $m$ colorful cliques. On the other hand if such a partition into colorful cliques exists, then for any two stores in this clique their $\delta$-balls intersect. Thus, by Helly's theorem for boxes, all of them intersect, which means that we can place a supermarket store in this intersection to satisfy all the stores in the clique.
\end{proof}

The following can be proven using standard methods. 

\begin{restatable}{theorem}{algorithm}\label{thm:algorithm}
    Let $\{r^1_1,r^2_1\},\ldots,\{r^1_n,r^2_n\}$ be $n$ restaurant chains, each having two stores, such that any two of them can be satisfied by a single supermarket chain. Then we can decide in time $O(n^3)$ whether all of them can be satisfied with a single supermarket chain.
\end{restatable}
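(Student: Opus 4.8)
The plan is to combine Lemma~\ref{lem:restaurant_graph} with a reduction to a $2$-colouring (equivalently, $2$-SAT) consistency problem. Fix $m=2$ and write $R_i=\{r^1_i,r^2_i\}$. By Lemma~\ref{lem:restaurant_graph}, a single supermarket chain satisfies all $R_i$ if and only if the vertex set of the restaurant graph $G$ can be partitioned into two colorful cliques $A$ and $B$. Since a colorful clique contains exactly one store of each chain, such a partition is the same thing as a choice of a Boolean value $x_i\in\{0,1\}$ for each chain $i$, where $x_i=0$ means $r^1_i\in A$ and $r^2_i\in B$, and $x_i=1$ means the opposite. So the task reduces to deciding whether there is an assignment $x\in\{0,1\}^n$ for which both classes are cliques of $G$.

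First I would observe that the condition imposed on a pair of chains $(i,j)$ depends only on whether $x_i=x_j$. If $x_i=x_j$ then one of the two classes contains $r^1_i$ and $r^1_j$ while the other contains $r^2_i$ and $r^2_j$, so the assignment restricted to $\{i,j\}$ makes both classes cliques exactly when $r^1_i\sim r^1_j$ and $r^2_i\sim r^2_j$ in $G$; if $x_i\neq x_j$ then one class contains $r^1_i$ and $r^2_j$ and the other contains $r^2_i$ and $r^1_j$, which works exactly when $r^1_i\sim r^2_j$ and $r^2_i\sim r^1_j$. (Here $\sim$ denotes adjacency in $G$, i.e., intersection of the two $\delta$-balls, tested in constant time for axis-aligned squares.) The hypothesis that any two chains are satisfiable by a single supermarket chain means precisely that for every pair at least one of these two conditions holds, so each pair $(i,j)$ contributes either no constraint, the constraint $x_i=x_j$, or the constraint $x_i\neq x_j$. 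Consequently, the two colorful cliques can be chosen to partition $V(G)$ if and only if the resulting system of $O(n^2)$ equalities and disequalities over $x_1,\dots,x_n$ is satisfiable.

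It then remains to solve this system, which is exactly a $2$-SAT instance with $n$ variables and $O(n^2)$ clauses (two clauses per constrained pair) and can be decided by standard methods: either computing strongly connected components of the implication graph, or a union--find / BFS parity-propagation over the variables and their negations. Building the constraint system costs $O(n^2)$ (a constant number of square-intersection tests per pair), and the solver is linear in the instance size, so the whole algorithm runs in $O(n^2)$ time, which is within the claimed $O(n^3)$ bound. Correctness in the other direction is immediate: a satisfying assignment defines a partition of $V(G)$ into two colorful cliques, and by Lemma~\ref{lem:restaurant_graph} this partition yields an actual placement of two supermarket stores that satisfies all chains.

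The reduction is the whole content of the argument, and the step to be careful about is the equivalence in the second paragraph: one has to check that \emph{pairwise} compatibility of the clique assignments is enough to make each class a \emph{global} clique of $G$ (which is exactly what Lemma~\ref{lem:restaurant_graph}, via Helly for boxes, provides), and to verify that $\delta$-satisfiability of any two chains translates, via the freedom to relabel the two stores of each chain independently, into the dichotomy ``$r^1_i\sim r^1_j \wedge r^2_i\sim r^2_j$'' or ``$r^1_i\sim r^2_j \wedge r^2_i\sim r^1_j$'' used above; everything else is routine.
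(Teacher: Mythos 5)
Your proof is correct. The setup is the same as the paper's: both invoke Lemma~\ref{lem:restaurant_graph} to reduce the question to partitioning the restaurant graph into two colorful cliques, and both rest on the same pairwise observation that the induced bipartite graph on the four stores of two chains admits at least one perfect matching by hypothesis, and that when it is not the full $K_{2,2}$ the matching is determined. Where you diverge is in the algorithmic realization. The paper turns the determined matchings into \emph{forced pairs} and repeatedly contracts them, declaring failure if a contraction ever produces a pair with no perfect matching; this is an iterative constraint-propagation procedure that the authors bound by $O(n^3)$. You instead encode the whole situation up front as a system of equality/disequality constraints on Boolean variables $x_i$ (one per chain, recording which store goes to which clique) and solve it in one pass by 2-SAT or union--find with parity. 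The two are morally the same algorithm --- a contraction of a forced pair is exactly a parity-respecting union of two variables, and an incompatible pair is exactly an inconsistency in the constraint system --- but your formulation is cleaner, makes the correctness argument more transparent (cliqueness is a pairwise condition, so satisfying all pairwise constraints suffices), and yields the sharper bound $O(n^2)$, consistent with the authors' own remark that their runtime is likely not optimal. One small point worth making explicit in a final write-up: the fact that each pair contributes only a constraint of the form $x_i=x_j$, $x_i\neq x_j$, or nothing (rather than a more general 2-clause) relies on the hypothesis guaranteeing at least one of the two perfect matchings of $K_{2,2}$ is present; you do state this, and it is exactly the paper's observation that no pair is incompatible at the outset.
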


\begin{proof}
    First, build the restaurant graph $G$. By Lemma \ref{lem:restaurant_graph}, we want to decide whether the vertex set of $G$ can be partitioned into two colorful cliques. Consider now the subgraph induced by the four stores of two restaurant chains $A$ and $B$. This is a subgraph of the bipartite graph $K_{2,2}$. If it is not the complete bipartite graph $K_{2,2}$ then it contains at most one perfect matching. In this case, we say that the pair $(A,B)$ of vertex pairs is \emph{forced}. If the subgraph does not contain any perfect matching, then we call the pair $(A,B)$ \emph{incompatible}. Note that as any two restaurant chains can be satisfied by a single supermarket chain, $G$ does not contain any incompatible pairs (yet).

    Given a forced pair $(A,B)$ on the vertices $a_1,a_2\in A$ and $b_1,b_2\in B$, we perform the following \emph{contraction}: assume without loss of generality that the unique matching between $A$ and $B$ connects $a_1$ with $b_1$ and $a_2$ with $b_2$. Create a new vertex pair $C=(c_1,c_2)$ and remove $A$ and $B$ from the graph $G$. Let $v$ be any other vertex in $G$. Connect $v$ to $c_i$ if and only if $v$ was connected to both $a_i$ and $b_i$. Note that in this contraction step we might create an incompatible pair involving $C$.

    Our algorithm now proceeds as follows: as long as there are forced pairs in $G$, contract them. As soon as there is an incompatible pair, return that there is no solution. If we end up in the situation that no pair is forced or incompatible, return that there is a solution.

    Clearly, this algorithm runs in time $O(n^3)$: building the graph takes time $O(n^2)$, each contraction takes time $O(n)$, and we perform at most $O(n^2)$ many contractions.

    It follows from the construction of the contraction step that the original graph can be partitioned into colorful cliques if and only if the contracted graph can, which proves the correctness of our algorithm.
\end{proof}

Let us mention here that we did not try to optimize the runtime and it is thus likely that it can still be improved. Further, it is an interesting problem whether for a larger constant number of stores we can still solve the problem in polynomial time.

\begin{question}
    Given $n$ restaurant chains, each having $m$ stores, where $m$ is a constant, such that any two of them can be satisfied by a single supermarket chain, is there a polynomial time algorithm to decide whether all of them can be satisfied by a single supermarket chain?
\end{question}

On the other hand, if the number of stores is unbounded, then the problem becomes NP-complete, even if there are only 3 restaurant chains.

\begin{restatable}{theorem}{hardness}\label{thm:hardness}
    Let $\{r^1_1,\ldots,r^m_1\},\{r^1_2,\ldots,r^m_2\},\{r^1_3,\ldots,r^m_3\}$ be 3 restaurant chains with $m$ stores each, such that any two of them can be satisfied by a single supermarket chain. Then it is NP-complete to decide whether all of them can be satisfied with a single supermarket chain.
\end{restatable}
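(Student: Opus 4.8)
The plan is to work throughout with the graph-theoretic reformulation of \cref{lem:restaurant_graph}. Membership in NP is immediate: a solution is a partition of the restaurant graph $G$ into $m$ colorful triangles (equivalently, the relabeling of the stores from the definition of $\delta$-satisfaction), and its validity -- each triple being a colorful triangle, i.e.\ the three $\delta$-balls sharing a common point -- is checkable in polynomial time. For hardness I would reduce from 3-SAT; to avoid a separate crossing gadget it is convenient to start from \emph{planar monotone rectilinear 3-SAT}, where the variables sit on a horizontal line and each clause is a rectangle lying entirely above or below it, joined to its variables by vertical legs. Given such a formula $\phi$ I want to build in polynomial time three restaurant chains with $m=\operatorname{poly}(|\phi|)$ stores each so that (i) between every two of the three color classes the restaurant graph has a perfect matching -- which is exactly the hypothesis that any two chains can be satisfied by one supermarket chain -- and (ii) the $3m$ stores admit a partition into $m$ colorful triangles if and only if $\phi$ is satisfiable.

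Condition (i) I would obtain essentially for free via a modularity principle: every gadget will be \emph{balanced} (equally many stores of each color) and will internally admit, independently, a perfect matching between each of the three pairs of colors, so that the union of these local matchings over all gadgets realizes the three required global matchings. The atomic gadget is the hexagonal ``city'' of \cref{thm:lower_bound} -- six points, two of each color placed antipodally, two of them compatible iff they are neighbours on the hexagon -- which is balanced, has all three pairwise matchings, and admits \emph{no} colorful triangle whatsoever. By perturbing and chaining such configurations I would build: \emph{wire} gadgets which, in any colorful-triangle partition, force a binary state to propagate along a chain of triangles (the forcing being analogous to the forced-pair contractions of \cref{thm:algorithm}, now carried along the wire); \emph{variable} gadgets, a closed wire of even length with exactly two colorful-triangle tilings, read as the two truth values; and \emph{clause} gadgets where three wires meet and which can be completed to colorful triangles precisely when at least one of the incoming wires is in the state that satisfies the clause. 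Laying these out on a coarse grid following the rectilinear drawing of $\phi$, with enough separation that non-adjacent gadgets contribute no stray edges, and padding with a few isolated colorful triples so that all three classes have exactly the target size, yields the instance; verifying the equivalence in (ii) is then a routine gadget-by-gadget argument.

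The main obstacle I expect is the geometric realizability of all this with \emph{equal} $L_\infty$-balls: squares of side $2\delta$ centred at $p_i$ and $p_j$ intersect iff $|x_i-x_j|\le 2\delta$ and $|y_i-y_j|\le 2\delta$, so the cross-color part of the restaurant graph is an ``AND'' of two unit-interval graphs and cannot be prescribed freely. Every wire bend, every junction where wires enter a variable or clause gadget, and the gadget interiors themselves have to be instantiated with explicit coordinates that obey this restriction, realize exactly the intended compatibilities and no others -- in particular between nearby but logically unrelated gadgets -- and keep every gadget balanced with all three pairwise matchings present. Carrying out this book-keeping correctly, in effect proving that the only global interaction between gadgets is the intended propagation of truth values, is where the real work lies; once the gadgets are in place, the equivalence in (ii), and hence NP-completeness, follows.
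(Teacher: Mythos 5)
Your overall strategy---NP membership via \cref{lem:restaurant_graph}, hardness via a reduction from a planar variant of 3-SAT with variable gadgets that are closed cycles admitting exactly two tilings, clause gadgets checking that at least one incident wire carries the satisfying state, and balanced gadgets to guarantee the pairwise-satisfiability hypothesis---is the same as the paper's. However, the proposal stops exactly where the proof begins: no gadget is actually constructed, no coordinates or intersection patterns are exhibited, and you say yourself that carrying out this book-keeping ``is where the real work lies.'' That is a genuine gap, not a routine verification. The whole difficulty of the statement is that the adjacency structure of the restaurant graph is not freely prescribable (as you correctly note, it is the intersection graph of equal axis-parallel squares, an intersection of two unit-interval structures), so the existence of gadgets with the advertised properties is precisely what must be proved, and a plan that defers it has not proved NP-hardness.

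Two more concrete concerns with the route you sketch. First, you propose the three-colored hexagonal ``city'' as the atomic building block; but the city admits \emph{no} colorful triangle at all, so any construction containing an unperturbed city is a NO-instance regardless of the formula, and it is not clear that a perturbation scheme turns chains of cities into wires with exactly two consistent tilings while keeping all three color classes balanced and all stray intersections suppressed. The paper avoids this entirely by making the variable gadget \emph{two}-colored: an alternating cycle $g_1,b_1,\ldots,g_c,b_c$ of green and blue squares in which consecutive squares intersect, which has exactly the two tilings ``true''/``false'' by a purely local argument; the third color (orange) is used only at clause gadgets, where three orange squares are placed over the relevant variable-cycle intersections, and the orange count is then padded by placing copies of (unmarked) blue squares. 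Second, your claim that pairwise satisfiability ``comes for free via modularity'' is not automatic: in the paper the blue--orange and green--orange matchings require a separate argument for the non-copied (marked) squares near each clause. So while the architecture you describe is the right one, the proof is missing its substance: explicit, realizable gadgets and the verification that they interact only as intended.
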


%Again, we only sketch the proof, for a full proof we refer to Appendix \ref{app:omitted}.

\begin{figure}[t]
        \centering
        \includegraphics[]{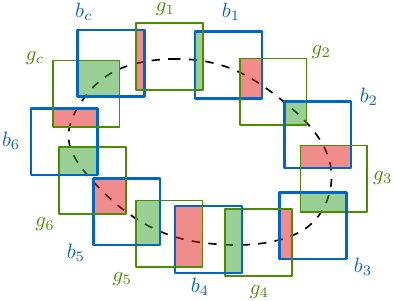}
        \caption{Placing supermarkets in the green (red) regions sets the variable to ``true'' (``false'').}
        \label{fig:variable}
    \end{figure}

\begin{figure}[b]
        \centering
        \includegraphics[]{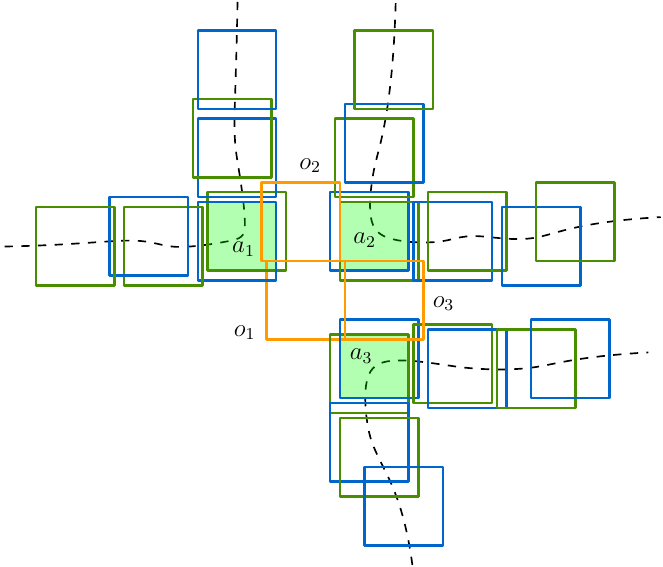}
        \caption{A clause gadget.}
        \label{fig:clause}
    \end{figure}

    \begin{figure}[t]
        \centering
        \includegraphics[]{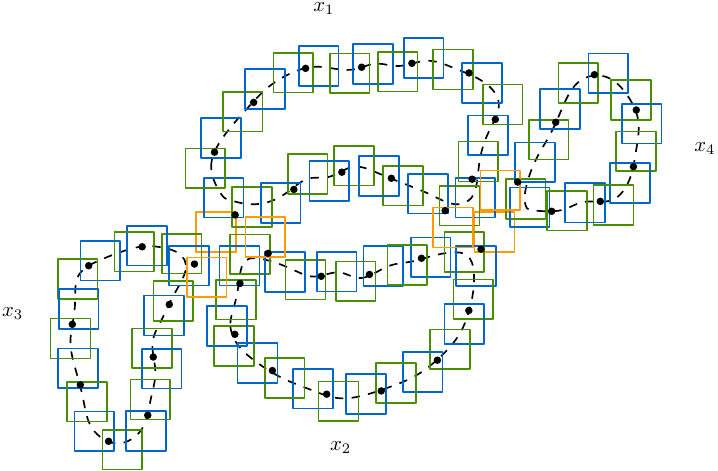}
        \caption{All gadgets for the (very small) formula $\varphi=(x_1\vee\neg x_2\vee x_3)\wedge(\neg x_1\vee x_2\vee x_4)$. The black points describe a placement of supermarkets corresponding to setting all variables to ``true''.}
        \label{fig:global}
    \end{figure}

\begin{proof}
    The containment in NP follows from Lemma \ref{lem:restaurant_graph}.
    In order to show that the problem is NP-hard, we use a reduction from planar 3-SAT \cite{planar3sat}. We describe our construction in terms of the $\delta$-balls. For each of the three restaurant chains we give a different color, namely blue, green and orange. Thus, for our reduction, given a planar SAT formula $\varphi$ with $m$ clauses we want to place $n$ blue squares, $n$ green squares and $n$ orange squares such that we can partition them into colorful triples with common intersections if and only if $\varphi$ is satisfiable. We now describe such a placement.

    \smallskip\noindent{\bf Variable gadgets.}
    For an illustration of the variable gadget, see Figure \ref{fig:variable}.
    For each variable in the formula $\varphi$ we use $c$ green and blue squares $g_1,\ldots,g_c$ and $b_1,\ldots,b_c$, where $c$ is some large enough number that is polynomial in $m$ and will be determined later. We arrange the squares in a cycle $g_1,b_1,g_2,b_2\ldots,g_c,b_c$ such that $g_i$ intersects exactly $b_{i-1}$ and $b_i$ and $b_i$ intersects exactly $g_i$ and $g_{i+1}$ (we set $b_0=b_c$ and $g_{c+1}=g_1$). Note that there are exactly two types of placements of supermarkets to satisfy the involved restaurants: we either place them in the intersection $g_i\cap b_i$ or in the intersections $b_i\cap g_{i+1}$. We will interpret the former as setting the variable to ``true'' and the latter as setting it to ``false''.

    \medskip\noindent{\bf Clause gadgets.}
    For an illustration of the clause gadget, see Figure \ref{fig:clause}.
    Consider the clause $(\ell_1, \ell_2, \ell_3)$, where $\ell_i$ is a literal of the variable $x_i$, that is, it is either $x_i$ or $\neg x_i$. Denote by $a_i$ an intersection in the variable gadget of $x_i$ corresponding to a placement of supermarkets setting $\ell_i$ to ``true''. We now place an orange square $o_1$ in such a way that it intersects exactly all three $a_i$'s. Note that in this way, a supermarket satisfying $o_1$ needs to lie in one of the intersections $a_i$, meaning that the corresponding literal is true. We place two more orange squares $o_2$, intersecting $a_1$ and $a_2$ as well as the intersections on the variable gadgets before them, and $o_3$, intersecting $a_2$ and $a_3$ as well as the intersections after them, see Figure \ref{fig:clause}.

    \medskip\noindent{\bf Linking the gadgets.}
    For an illustration of this step, see Figure \ref{fig:global}.
    Consider the variable-clause graph $G_{\varphi}$ of the formula $\varphi$, i.e., the graph whose vertices are the variables and clauses of $\varphi$, with a connection between variable $x_i$ and clause $C_j$ if the literal $\ell_i$ appears in $C_j$. As $\varphi$ is a planar 3-SAT formula, this graph is planar. Fix an embedding of $G_{\varphi}$ on a polynomial grid, e.g.\ by the algorithm of de Fraysseix, Pach and Pollack \cite{de1990draw}. For each vertex corresponding to a variable, route the variable gadget along its edges and place a clause gadget at each clause vertex. As the embedding is on a polynomial grid, this can be done with polynomially many green and blue squares. By construction, the number of green and blue squares is equal. Further, we have so far used 3 orange squares for each clause.
    
    \medskip\noindent{\bf Adding the remaining orange squares.}
    For an illustration of the placement of the additional orange squares, see Figure \ref{fig:orange}.
    So far we have only added 3 orange squares for each clause, that is, one per variable-clause incidence. We first add one more orange square for each such incidence as follows: let $b_i$ be the blue square contributing to the intersection setting the literal to true and consider the 2 intersection before and the 2 intersections after it on the variable cycle. We place an orange square in such a way that it intersects all those 5 intersections (see Figure \ref{fig:orange}). We further remember $b_i$ and $b_{i-1}$ as \emph{marked}. Note that we have 2 marked blue squares and 2 orange squares per variable-clause incidence. Finally, we add an orange square at the location of each unmarked blue square. We thus have the same number of orange squares as blue squares and hence also the same number as green squares.

    \medskip\noindent{\bf Correctness.}
    We first show that we have constructed a valid instance of the problem, that is, that any two restaurant chains can be satisfied by a single restaurant chain. For the green and blue squares this follows from the construction: all the variable cycles are independent and each cycle can be satisfied in two ways. For the blue and orange squares, all except the marked squares are copies of each other, so it suffices to show that the marked squares can be satisfied. This follows from the placement of the orange squares at a clause gadget. A similar argument applies to the green and orange squares.
    
    By the construction of the clause gadgets and the placements of the orange squares it also follows that any satisfying assignment of the formula $\varphi$ defines a placement of supermarket stores that satisfy all three restaurant chains. Finally, any placement of supermarket stores that satisfy all three restaurant chains can be mapped back to an assignment which by construction satisfies $\varphi$.
\end{proof}

\begin{figure}[t]
        \centering
        \includegraphics[]{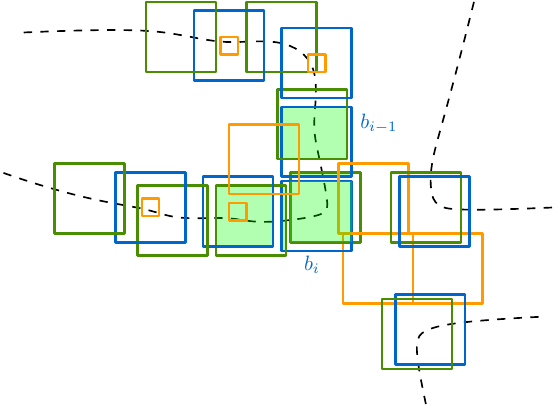}
        \caption{The placement of the remaining orange squares. The orange squares that are copies of blue squares are drawn small.}
        \label{fig:orange}
    \end{figure}

\medskip\noindent{\bf Acknowledgments.}
    This work was initiated during the 21st Gremo's Workshop on Open Problems in Pura, Switzerland. The authors would like to thank the other participants for the lively discussions. The first two authors were supported by the German Academic Scholarship Foundation. The second author was further supported by a fellowship of the German Academic Exchange Service (DAAD) and the Cusanuswerk.

\bibliography{refs}

\end{document}